\definecolor{plotcolor1}{rgb}{0,0,.8}
\definecolor{plotcolor2}{rgb}{.8,0,0}
\definecolor{plotcolor3}{rgb}{0,.6,0}
\definecolor{plotcolor4}{rgb}{0,.8,.8}
\colorlet{maxcolor1}{black!50!plotcolor1}
\colorlet{mincolor1}{white!50!plotcolor1}
\colorlet{maxcolor2}{black!50!plotcolor2}
\colorlet{mincolor2}{white!50!plotcolor2}
\def\linestyleA{solid}
\def\linestyleB{dashed}
\def\linestyleC{dashdotted}
\def\linestyleD{densely dotted}
\newlength{\figurewidth}
\newlength{\subfigheight}
\newlength{\subfigtop}
\pgfplotsset{
compat=1.14,
width=.85\figurewidth, height=0.35\figurewidth,
scale only axis,
every axis plot/.append style={line width=1pt},
every axis plot/.append style={mark size=2pt},
grid style=dashed,
label style={font=\small},
legend style={font=\scriptsize, inner xsep=2pt, inner ysep=1pt, nodes={inner sep=1pt, text depth=1pt}},
major tick length=3pt,
minor tick num=4,
minor tick length=1pt,
scaled ticks=false,
tick label style={font=\scriptsize},
title style={yshift=-7pt},
xmajorgrids,
ymajorgrids,
/pgf/number format/.cd,
set thousands separator={\,},
/tikz/.cd
}
\newcommand{\DCOP}{\mathcal{T}}
\newcommand{\var}[1]{X_{#1}}
\newcommand{\allvars}{\mathcal{X}}
\newcommand{\agent}[1]{A_{#1}}
\newcommand{\allagents}{\mathcal{A}}
\newcommand{\domain}[1]{D_{#1}}
\newcommand{\alldomains}{\mathcal{D}}
\newcommand{\constraint}{C}
\newcommand{\constraintsize}{k}
\newcommand{\allconstraints}{\mathcal{R}}
\newcommand{\varvalue}{\upsilon}
\newcommand{\numvars}{n}
\newcommand{\numagents}{\numvars}
\newcommand{\somevars}[1]{\allvars_{#1}}
\newcommand{\neighborset}[1]{\mathcal{M}_{#1}}
\newcommand{\neighborsize}[1]{|\neighborset{#1}|}
\newcommand{\domainsize}[1]{|\domain{#1}|}
\newcommand{\cpa}[1]{\hat{\mathcal{X}}_{#1}}
\newcommand{\posreal}[1]{{\mathbb R}^{{#1}}_{\geq0}}
\DeclareMathOperator*{\argmin}{arg\,min}
\newlength{\graphwidth}
\begin{document}

\title{Hybrid DCOP Solvers: Boosting Performance of Local Search Algorithms}

\pdfinfo{
/Title (Hybrid DCOP Solvers: Boosting Performance of Local Search Algorithms)
/Author (Cornelis Jan van Leeuwen and Przemyz{\l}aw Pawe{\l}czak)
/Abstract (We propose a novel method for expediting both symmetric and asymmetric Distributed Constraint Optimization Problem (DCOP) solvers. The core idea is based on initializing DCOP solvers with greedy fast non-iterative DCOP solvers. This is contrary to existing methods where initialization is always achieved using a random value assignment. We empirically show that changing the starting conditions of existing DCOP solvers not only reduces the algorithm convergence time by up to 50\%, but also reduces the communication overhead and leads to a better solution quality. We show that this effect is due to structural improvements in the variable assignment, which is caused by the spreading pattern of DCOP algorithm activation.)
/Subject (Hybrid DCOPs)
}

\author{Cornelis Jan van Leeuwen\inst{1,2} \and
Przemyz{\l}aw Pawe{\l}czak\inst{2}}

\institute{TNO, Eemsgolaan 3, Groningen, The Netherlands \\
\email{coen.vanleeuwen@tno.nl}
\and
Delft University of Technology, Van Mourik Broekmanweg 6, Delft, The Netherlands \\
\email{\{p.pawelczak,c.j.vanleeuwen-2\}@tudelft.nl}}

\maketitle

\begin{abstract}
We propose a novel method for expediting both symmetric and asymmetric Distributed Constraint Optimization Problem (DCOP) solvers. The core idea is based on initializing DCOP solvers with greedy fast non-iterative DCOP solvers. This is contrary to existing methods where initialization is always achieved using a random value assignment. We empirically show that changing the starting conditions of existing DCOP solvers not only reduces the algorithm convergence time by up to 50\%, but also reduces the communication overhead and leads to a better solution quality. We show that this effect is due to structural improvements in the variable assignment, which is caused by the spreading pattern of DCOP algorithm activation.
\end{abstract}

\section{Introduction}
\label{sec:introduction}
Distributed Constraint Optimization Problems (DCOPs) are a method for formalizing and solving problems that have a distributed nature, and in which multiple cooperating agents control discrete variables in order to optimize a common problem. DCOPs can be found in many different domains such as sensor networks~\cite{Farinelli2014}, mobile sensing team coordination~\cite{Yedidsion2014}, communication~\cite{Yeoh2012}, home automation~\cite{Rust2016} and smart grid optimization~\cite{Fioretto2017}. The underlying structure of DCOPs is always the same: agents have to assign variables that not only optimize a local set of constraints, but have to send messages to other agents in order to cooperatively come up with variable assignments that are optimal for the complete set of agents. A special kind of DCOPs can be formulated where agents having a shared constraint may assign different costs for a value assignment. These problems are called Asymmetric DCOPs (ADCOPs)~\cite{Grinshpoun2013}. In ADCOPs the aspect of cooperation is even more important, since an assignment leading to a local improvement may deteriorate the global performance.

A variety of algorithms that find solutions for DCOPs can be classified as either \emph{complete} or \emph{incomplete} solvers~\cite{Leite2014}. Solvers such as ADOPT~\cite{Modi2005}, DPOP~\cite{Petcu2005} or AFB~\cite{Gershman2009} are the algorithms of the complete type, and are used to find the optimal solution. However, DCOPs are NP-hard~\cite{Modi2003} so a solution becomes intractable for large scale problems. Therefore incomplete solvers use heuristic approaches to find a solution which may be suboptimal, but reaching the solutions much faster. In other words, there is a trade-off between solution quality and speed.

In this paper we introduce a new class of incomplete DCOP solvers, that combine features of different DCOP solvers and combine them into \emph{hybrid solvers}. Specifically, we show that we can use different \emph{initialization} methods for existing DCOP algorithms, which has a profound impact on their overall performance. In order to do so we show that using different initialization methods that are \emph{not} iterative in approach, and are hence very fast in converging to a solution, we can reduce algorithm running times and improve the solution quality.

In the evaluation of the proposed hybrid DCOP solvers, we will consider existing symmetric and asymmetric solvers, as well as a new algorithm which is an extension of ACLS, which we will refer to as ACLS-UB.

\section{Problem Statement and Notation}
\label{sec:problem_statement}

Before introducing our proposed novel class of hybrid DCOP solver we need to start with providing the definition of (A)DCOPs. DCOPs are problems from the field of multi-agent systems in which agents can reason and send messages to one another to cooperatively decide on their variable assignments in order to find a solution to a global cost minimization function.

\paragraph{Problem Formalization and Notation}

Following the notation from~\cite{Gershman2009}, DCOPs are defined as a tuple $\DCOP=\langle \allagents, \allvars, \alldomains, \allconstraints \rangle$ where $\allagents$ is a finite set of agents $\{\agent{1}, \agent{2}, \ldots, \agent{\numagents}\}$ and $\allvars$ is the set of variables $\{\var{1}, \var{2}, \ldots, \var{\numvars}\}$ with finite discrete domains $\{\domain{1},\domain{2},\ldots,\domain{\numvars}\}$ from the set of domains $\alldomains$ such that $\var{i} \in \domain{i}$. Furthermore we require that each agent $\agent{i}$ is assigned one corresponding variable, $\var{i}$, and therefore $|\allagents| = |\allvars| = |\alldomains|$. This one-on-one relation between agents and variables, is seen in many DCOP studies, but is not strictly required. Then, $\allconstraints$ is a set of relations or constraints between variables, in which each constraint $\constraint \in \allconstraints$ defines a non-negative cost depending on the value assignment of the involved variables. Every possible value assignment of a set of variables has a particular induced global cost $\constraint{:~} \domain{i_1} \times \domain{i_2} \times \ldots \times \domain{i_\constraintsize} \to \posreal{}$, while for ADCOPs each constraint defines a set of costs for every involved variable, i.e. $\constraint{:~} \domain{i_1} \times \domain{i_2} \times \ldots \times \domain{i_\constraintsize} \to \posreal{\constraintsize}$. Having all definitions of $\DCOP$, in (A)DCOPs the goal of the agents is to minimize the global cost function, i.e.
\begin{equation}
\label{eq:min}
\argmin_\allvars \sum{\allconstraints}.
\end{equation}
In the rest of this paper, as for most DCOP studies, we shall only take into account binary constraints, in which exactly two variables are involved for every constraint, which is then of the form $\constraint_{i,j}{:~}\domain{i} \times \domain{j} \to \posreal{2}$.

\paragraph{Definitions}

We refer to agents as \emph{neighbors} if there exists a constraint between the corresponding variables. This follows the real-life situation of limited range between cooperative agents, e.g. communication range in wireless networks. The set of all neighbors of an agent $\neighborset{i} \subseteq \allagents$ is called the \emph{neighborhood}. The set $\cpa{i}$ denotes the set of known assigned values of the neighbors of $\agent{i}$ and is also referred to as the \emph{current partial assignment} (CPA). Note that the constraints between variables can be depicted as an undirected graph.

\section{A New Class of DCOP Solvers: Hybrid Solvers}
In this paper we propose a new class of DCOP solvers in which we propose a simple, yet an effective idea. We take the best of different types of existing DCOP solvers, forming a new class of \emph{hybrid} DCOP solvers. Particularly, we propose to improve the performance of existing local search algorithms by modifying the initialization methods to find the initial value assignment. In the field of Constraint Satisfaction Problems, which is closely related to DCOPs, the approach of using a initialization and repairing it is well known, and can yield great benefits~\cite{Minton1992}. From the fields of evolutionary algorithms~\cite{Rahnamayan2007}, clustering~\cite{Cao2009}, neural networks~\cite{Yam2000,Dolezel2016} and meta-learning~\cite{Feurer2015} we know that initialization methods can have a great effect on the performance of an algorithm. However to the best of our knowledge, there has been little to no work on the effects of different initialization methods for (A)DCOPs. In this paper we will study the effect different initialization methods may have on the performance of (existing) DCOP algorithms.

\textbf{Hybrid DCOP Solver:} We define a hybrid DCOP solver as a solver which executes sequentially other (existing) DCOP solvers. Selection of \emph{which} DCOP method to use in the next DCOP solving iteration and \emph{when} to switch to a new DCOP solver method is a core of the hybrid solver definition.

\subsection{Motivation for a Hybrid DCOP Solver}

Most (if not all) local search DCOP algorithms use an initial random assignment for all of the variables, which will be iteratively improved upon. Instead, an initial assignment can be computed by a non-iterative DCOP algorithm, such as a simple greedy algorithm, or a more elaborate greedy algorithm such as the one introduced in~\cite{vanLeeuwen2017aaai}. Since these methods will assign a value only once, and then terminate, they quickly provide a good initialization assignment from which one can start another DCOP method.

We hypothesize that the combination of different initialization methods for iterative algorithms in DCOP solution search, will be beneficial because of two effects:

\begin{enumerate}
\item \emph{Solution quality improvement over initial assignment}: most probably a simple initialization method will find a sub-optimal solution, and many local search algorithms will be able to improve it. Algorithms that are known to provide monotonically decreasing solution costs (any algorithm that uses a coordinated change approach, e.g. MGM-2, ACLS, MCS-MGM) are guaranteed to find better (or equal) solutions compared with the initial value assignment; and
\item \emph{Increased convergence speed for local search algorithms}: DCOP algorithms that use local search will most likely converge faster when a good solution is used for initial DCOP value assignment. This will lead to a shorter total running time for algorithms that are initialized with a better assignment.
\end{enumerate}

\section{Initialization of DCOP Solvers: Classification}

In our experiments we combine different \emph{initialization methods} with existing \emph{local search algorithms} which we shall also refer to as \emph{iterative methods}. Since the aim of this study is to improve the solution quality and convergence speed of solvers, we do not take into account complete solvers. To understand why only certain combination of DCOP solvers improves the solution, we need first to classify (i) initialization methods, and (ii) types of DCOP iterative solvers.

\subsection{DCOP Classification: Initialization Methods}

\begin{itemize}
\item\textbf{Random} A de facto standard method for all DCOP solvers. It does not take into account any constraints and starts with the random variable assignment.

\item \textbf{$k$ Step Look-ahead:} We define a look-ahead initialization algorithm as the one in which one randomly chosen initial node is triggered first, and only after it has chosen a value it will activate its neighbors. When choosing a value, it takes into consideration the effect on all of its neighbors that are reachable within $k$ steps (edges or hops). Three special cases of $k$ step look-ahead are already known and described in the literature:

\begin{itemize}
\item \textbf{Zero Step Look-ahead (ZSLA):} A zero step look-ahead algorithm ($k=0$) is the one in which an agent optimizes only for the constraints it is directly involved in. Such algorithm are also referred to as \emph{greedy, breadth-first} algorithms;

\item \textbf{Single Step Look-ahead (SSLA):} A single-step-look-ahead algorithm ($k=1$) is defined as the one in which an agent optimizes not only for the constraints it is involved in, but also the constraints its one-hop neighbors are in. One such algorithm is the recently proposed CoCoA algorithm~\cite{vanLeeuwen2017aaai} and its variants CoCoA\_UF and CoCoA\_WPT~\cite{vanLeeuwen2017saso};

\item \textbf{Max Step Look-ahead (MSLA):} If $k$ is equal to the height of the graph's minimal spanning tree, and the algorithm would be started at the root of the spanning tree, the algorithm becomes a complete algorithm, and is in fact equivalent to DPOP~\cite{Petcu2005}.
\end{itemize}
\end{itemize}

\subsection{DCOP Classification: Existing Iterative Methods}
Classifying iterative methods used in DCOP solvers we can divide them into two main groups:

\begin{itemize}
\item\textbf{Symmetric DCOP Solvers:} Those include DSA~\cite{Zhang2005}, MGM and MGM2~\cite{Maheswaran2004} and generalized DBA~\cite{Okamoto2016};

\item\textbf{Asymmetric DCOP Solvers:} Those include MCS-MGM~\cite{Grubshtein2010}, and ACLS~\cite{Grinshpoun2013} with its new version ACLS-UB (which is also a novel contribution of this work and described in Section~\ref{sec:aclsub}).

\end{itemize}

\paragraph{Remark on Max-Sum:} We are naturally aware of another popular DCOP solver: the Max-Sum algorithm~\cite{Farinelli2008} or any one of its variants. However Max-Sum is unable to utilize the benefit of initializing, as it tries to approximate the global utility of any value, and uses this to determine the best variable assignment. There are extensions of Max-Sum that are able to build upon an an initial assignment by using value propagation~\cite{Zivan2012}. In a recent paper~\cite{Chen2017} the effect of initialization was studied in variant called Max-Sum\_ADSSVP. The authors find that the timing, and approach to initialization has a great effect on the performance, both in terms of solution quality and convergence speed. For our evaluation however, we leave Max-Sum out of the comparison, and refer to this paper to provide a complete overview of different hybrid algorithms.

\subsection{Novel Iterative DCOP Solver: ACLS-UB}
\label{sec:aclsub}
In addition to existing DCOP algorithms listed above we introduce a variant of the ACLS, denoted as \emph{Unbiased} (ACLS-UB).

\paragraph{ACLS-UB Algorithm:} In the original ACLS algorithm~\cite{Grinshpoun2013}, at every iteration an agent chooses a variable assignment that would lower its local costs and proposes it as a new value to its neighbors. Neighbors respond with the effect on their side, after which the proposition which has the best effect on the regional cost function is selected. In the ACLS-UB algorithm a value assignment is proposed from \emph{all} possible values, instead from the subset that improves its local state. The ACLS-UB algorithm is described using pseudo code in Algorithm~\ref{algo:aclsub}.

ACLS-UB works by iteratively proposing a random value from its domain $\domain{i}$, and sends that value to its neighbors. The neighbors respond by sending the effect of the assignment on their local costs, taking into account all known value assignments. When these local effects are received by the initial agent, it sums over all received effects, and assigns the proposed value with probability $p$ only if it will reduce the current local cost.

\renewcommand{\algorithmicensure}{}
\begin{algorithm}[t]
\caption{\label{algo:aclsub}ACLS-UB Algorithm}
\begin{algorithmic}[1]
\ENSURE On $\agent{i}$ when activated
\STATE $\var{i} \leftarrow \text{chooseRandomValue()}$
\WHILE {(no termination condition is met)}
\STATE send $\var{i}$ to $\forall \agent{j} \in \neighborset{i}$
\STATE \label{line:choose} $\varvalue \leftarrow \text{chooseRandomValue()}$
\STATE send $\varvalue$ to $\forall \agent{j} \in \neighborset{i}$
\STATE wait for incoming constraint cost $\delta_{j}$ from $\agent{j}$
\STATE $\Delta_{i} \leftarrow \sum_{j}^{j\in\neighborset{i}}{\delta{j}}$
\IF {$\Delta_{i} < $ currentCost \AND random$[0,1]$ $< p$}
\STATE assign $\var{i} \leftarrow \varvalue$
\ENDIF
\ENDWHILE

\medskip
\ENSURE On $\agent{j}$ when receiving $\varvalue$ from $\agent{i}$
\STATE send constraint cost $\delta_{j}$ for $\var{i} = \varvalue$
\end{algorithmic}
\end{algorithm}

\paragraph{Relation of ACLS-UB to Other Solvers:} The main difference between ACLS and ACLS-UB is in line~\ref{line:choose} of Algorithm~\ref{algo:aclsub}, where any random value is picked from the domain. In the long run, the effect of this pick is that the effect of all values from the domain are used to retrieve the induced effect on the neighbors' local cost.

Intuitively ACLS-UB works very similar to CoCoA~\cite{vanLeeuwen2017aaai,vanLeeuwen2017saso}, with the major difference that CoCoA operates in one single iteration instead of iteratively trying different values. Another difference is that in ACLS-UB the neighbors will send back the value of the constraint cost, whereas CoCoA will send back the lowest induced cost for any assignment in conjunction with the proposed value and the CPA. This extra look-ahead is not efficient in ACLS-UB, since the next-hop neighbors \emph{will} in fact already have an assignment, and the lowest cost will be too optimistic. Note that the unique-first approach of CoCoA is not required in ACLS-UB, as it can easily recover from any earlier suboptimal assignments in later iterations, whereas CoCoA cannot.

\section{Hybrid DCOP Solvers: Introduction and Initial Results}
\label{sec:first_results}

In order to understand whether there is any benefit from hybrid solvers, we performed the following experiments\footnote{For reproducibility and validation of our results, all (Java) code for the algorithms is available at \mbox{\url{https://github.com/coenvl/jSAM/tree/OptMAS18}}, and for the experimental setups (MATLAB) at \mbox{\url{https://github.com/coenvl/mSAM/tree/OptMAS18}}.}. For any problem, we initiate 200 problem instances which are initialized by three methods: \emph{random}, \emph{ZSLA} (i.e. greedy) and \emph{SSLA} (i.e. CoCoA) and subsequently solved by other DCOP solvers (depending on the experiment). We report the average result of all problems. We assume a solver has converged when no better solutions have been found for more than 100 iterations, and define the moment of ``convergence'' as the first iteration in which the solution was within 1\% of the minimal solution. In this way we can compare the convergence speed of different algorithms, and do not have to specify the number of iterations beforehand, in a way similar to the any-time solution as proposed in~\cite{Zivan2014}.

As performance metrics we will score solvers on the following metrics:
\begin{itemize}
\item \textbf{The number of iterations required to converge}, denoted as {I};
\item \textbf{Final cost of the solution after the algorithm converged}, denoted as {S};
\item \textbf{Number of messages that are transmitted during the run}, denoted as {M};
\item \textbf{Number of constraint evaluations}, denoted as {E}; and
\item \textbf{Running time until the moment of convergence}, denoted as {T} in seconds.
\end{itemize}
The constraint evaluations are indicative of the computational complexity and can also be referred to as Non Concurrent Constraint Checks (NCCC)s~\cite{Meisels2002}.

\subsection{Experiment Results}
\label{sec:experiment_init}

\begin{figure}[t]
\begin{center}
\begin{tikzpicture}
\begin{axis}[
xmin=0,
xmax=2,
xlabel={Running time (s)},
ymin=50,
ymax=200,
ylabel={Solution cost},
x tick label style={
/pgf/number format/.cd,
fixed,
fixed zerofill,
precision=1,
/tikz/.cd
},
legend style={legend cell align=left, align=left, draw=black}
]
\addplot [color=plotcolor1, style=\linestyleA] table[]{
0.0509758612025529 194.03
0.09549031738582 194.03
0.119534556956994 167.835
0.159732878877887 167.835
0.180948527140603 137.84
0.220252570670885 137.84
0.256811044339492 137.84
0.279596230448515 129.16
0.319596764694406 129.16
0.340616426766558 105.055
0.38001131762541 105.055
0.412023655970444 105.055
0.433763348397974 101.81
0.474005967143695 101.81
0.49468925305131 87.67
0.533850304666027 87.67
0.563773984084942 87.67
0.585097952981256 86.47
0.625686398391355 86.47
0.646651003962901 79.125
0.686243689790214 79.125
0.715386640862408 79.125
0.736732362500971 78.63
0.777373056429859 78.63
0.798724149701169 75.205
0.839039046628179 75.205
0.867582065091935 75.205
0.889040264635876 75.025
0.929597559681465 75.025
0.950399893077887 73.79
0.989530168117883 73.79
1.01818184453773 73.79
1.03965955226183 73.69
1.08062135532166 73.69
1.10163142284815 73.265
1.14162274545499 73.265
1.16992153696526 73.265
1.19090421755222 73.235
1.23104637071997 73.235
1.25160042746965 73.02
1.29077792948173 73.02
1.31893118912924 73.02
1.33965421709638 73.02
1.37935604771819 73.02
1.3999592879104 72.85
1.43903408525252 72.85
1.46705040290892 72.85
1.48803214628637 72.84
1.52788186273491 72.84
1.54914528497916 72.71
1.58863750705186 72.71
1.61681377938671 72.71
1.63774420963152 72.695
1.67778760389077 72.695
1.6985663520633 72.605
1.73830990491882 72.605
1.76642688872333 72.605
1.78737848458691 72.59
1.82725306990817 72.59
1.84788694992274 72.515
1.88692324144669 72.515
1.91503433031275 72.515
1.93592447878205 72.51
1.97554450414141 72.51
1.99610350220973 72.415
2.03533590391305 72.415
2.06311300157575 72.415
2.08384133918123 72.415
2.12483331163529 72.415
2.1455307559707 72.345
2.18490272166373 72.345
2.21289256406301 72.345
2.23368402108831 72.32
2.2745260418433 72.32
2.29585157336327 72.255
2.33562262073684 72.255
2.36375141082853 72.255
2.38462496303128 72.24
2.42453877622318 72.24
2.44509492801903 72.12
2.48436521194975 72.12
2.5125709352731 72.12
2.5335335898443 72.105
2.57345113364061 72.105
2.59399793157495 72.065
2.6332530195833 72.065
2.66109378002854 72.065
2.68175831627576 72.05
2.72166196056208 72.05
2.74213211153304 71.975
2.78126138372239 71.975
2.8090145333116 71.975
2.82985229468639 71.975
2.86993203042977 71.975
2.89046406093248 71.965
2.92966463944967 71.965
2.95734694402234 71.965
2.97806183066557 71.965
3.01849466282885 71.965
3.03990331243392 71.925
3.07944147782989 71.925
3.10755645410974 71.925
3.12852757274041 71.925
3.16863673941528 71.925
3.18918316902993 71.88
3.22884444619488 71.88
3.25677665567907 71.88
3.27758329221646 71.88
3.3176546660088 71.88
3.33856537109303 71.84
3.37805072272711 71.84
3.40630048942758 71.84
3.42723920321875 71.84
3.46727967097746 71.84
3.48816650821626 71.805
3.52773086077771 71.805
3.55643409648444 71.805
3.57730722748992 71.795
3.61741566481887 71.795
3.63810618036798 71.77
3.67664559950959 71.77
3.70391168459581 71.77
3.72438645232649 71.77
3.76306419374637 71.77
3.78348069220763 71.745
3.8179219020029 71.745
3.84216049949985 71.745
3.86012487678612 71.745
3.89462114490536 71.745
3.91247563711102 71.715
3.93884139388217 71.715
3.95733046262777 71.715
3.97108013396626 71.715
3.99689136355962 71.715
4.01010232358671 71.71
4.03183426124369 71.71
4.04738501952641 71.71
4.0590238871731 71.71
4.08127029085951 71.71
4.09276458026324 71.71
4.11111584456617 71.71
4.12417414337409 71.71
4.13388923168147 71.7
4.15281278091214 71.7
4.1625320264913 71.7
4.1792516236152 71.7
4.19114053839587 71.7
4.19991996340142 71.7
4.21679707189491 71.7
4.2255027144432 71.69
4.24000616297307 71.69
4.25022672812135 71.69
4.25780792529163 71.69
4.27220581740897 71.69
4.27964927942446 71.67
4.292649773921 71.67
4.3020274412758 71.67
4.30893905147833 71.665
4.32213386013989 71.665
4.32887446424983 71.655
4.34093230101711 71.655
4.34961303822538 71.655
4.35600219970731 71.655
4.36835919593987 71.655
4.37470966015032 71.655
4.38608808566022 71.655
4.39395191714046 71.655
4.39986574929536 71.655
4.41138100192438 71.655
4.41730147842066 71.655
4.42755536920767 71.655
4.43509418043945 71.655
4.44068193661765 71.655
4.45048907567438 71.655
4.45583263771965 71.645
4.46473477153056 71.645
4.47098614680347 71.645
4.47568709126998 71.645
4.48444090894006 71.645
4.48898235274597 71.645
4.49576438005778 71.645
4.50061265969485 71.645
4.50419555697049 71.645
4.51087069316672 71.645
4.51436531952827 71.645
4.52003278409938 71.645
4.52413748644784 71.645
4.52724813588298 71.645
4.53312025856772 71.645
4.53626943023131 71.635
4.54091485323919 71.635
4.54419114442763 71.635
4.54665807497165 71.635
4.55141118590554 71.635
4.5538528555534 71.635
4.55809917135362 71.635
4.56112187990371 71.635
4.5634023476916 71.635
4.56775594298411 71.635
4.57002265530783 71.635
4.57393529170008 71.635
4.57668309518365 71.635
4.57873193364556 71.635
4.58278784092819 71.635
4.58484414242229 71.635
4.58792387123689 71.635
4.59016162123388 71.635
4.59184390319683 71.635
4.59494264423636 71.635
4.59656851676201 71.635
4.59942462082217 71.635
4.60146269049148 71.635
4.60301505588795 71.635
4.60602180817254 71.635
4.60757638348717 71.635
4.6101795613203 71.635
4.61202423397709 71.635
4.61346739440803 71.635
4.61587988474876 71.635
4.61711400187369 71.635
4.61927965055578 71.635
4.62083550404915 71.635
4.62202149892987 71.635
4.62429796625535 71.635
4.62544861338576 71.635
4.62711216574532 71.635
4.6282160490383 71.635
4.62902922051519 71.635
4.63072710501086 71.635
4.63158864306284 71.635
4.63283770083908 71.635
4.63368656285886 71.635
4.63432862066463 71.635
4.63556420559824 71.635
4.63620822899128 71.635
4.63721114163424 71.635
4.6379567629148 71.635
4.63853077092229 71.635
4.6394934765477 71.635
4.640005347929 71.635
4.64095261330115 71.635
4.64164234298003 71.635
4.64215019201855 71.635
4.64294894432647 71.635
4.64334614429453 71.635
4.64412441292211 71.635
4.6446701824787 71.635
4.64507342872449 71.635
4.64587714423144 71.635
4.6463151182835 71.635
4.6470974438978 71.635
4.64764920503117 71.635
4.64806105208879 71.635
4.6488651012715 71.635
4.64926641475058 71.635
4.64984105911239 71.635
4.65024184563904 71.635
4.65054415770209 71.635
4.65094663266456 71.635
4.65114452427135 71.635
4.65151662379936 71.635
4.65178096976022 71.635
4.65198104211134 71.635
4.65236923283384 71.635
4.6525669968051 71.635
4.65294688757096 71.635
4.65321908676408 71.635
4.65341795204769 71.635
4.65381928376042 71.635
4.65401987394714 71.635
4.65440516916632 71.635
4.65466915410095 71.635
4.65487691558149 71.635
4.65527659532569 71.635
4.65548323908361 71.635
4.65579831652373 71.635
4.65593262739941 71.635
4.65603291611099 71.635
4.65621851094548 71.635
4.656317579826 71.635
4.65649812211658 71.635
4.65663420530286 71.635
4.65673666928866 71.635
4.65692928043096 71.635
4.65702725711595 71.635
4.65722355327854 71.635
4.65736344912064 71.635
4.65746452552583 71.635
4.65766564995844 71.635
4.65776286265359 71.635
};
\addlegendentry{Random}

\addplot [color=plotcolor2, style=\linestyleB] table[]{
0.0510080928223974 134.735
0.0869022745746728 134.735
0.109703807149268 126.865
0.149962615551187 126.865
0.170823103345036 103.875
0.210368058951573 103.875
0.242410499226346 103.875
0.264213285546588 100.935
0.304182773359765 100.935
0.32499851578104 86.77
0.364330741237912 86.77
0.394151376841006 86.77
0.415646322656073 85.515
0.455541231201565 85.515
0.476308583344 78.005
0.515518752760119 78.005
0.544283536146202 78.005
0.565302603801424 77.585
0.605122741625923 77.585
0.625844912611594 74.485
0.665751798840559 74.485
0.694288415470448 74.485
0.715458999271748 74.33
0.755591621711607 74.33
0.776963032737056 73.085
0.817633607968541 73.085
0.846381042038406 73.085
0.867408765206407 73.01
0.907813420913192 73.01
0.928459971489868 72.585
0.967822588791484 72.585
0.995717464881082 72.585
1.01660483636577 72.555
1.05650792634919 72.555
1.07746910399488 72.33
1.11697474967936 72.33
1.1454844880885 72.33
1.1666760861693 72.325
1.20704010454445 72.325
1.22770940342056 72.125
1.2671626218965 72.125
1.29513300352311 72.125
1.31600510979756 72.105
1.35621717077234 72.105
1.37725475834399 72.025
1.41707781537556 72.025
1.44525998629559 72.025
1.46653027350837 72.025
1.50746966923798 72.025
1.52870680950177 71.95
1.56876067352179 71.95
1.59711863686705 71.95
1.61811454778912 71.95
1.65793034418199 71.95
1.67851675654032 71.84
1.71826569379213 71.84
1.74613902682644 71.84
1.7671836671732 71.84
1.80760116301501 71.84
1.82860202254919 71.775
1.86815799127921 71.775
1.89646849061496 71.775
1.91765235033548 71.775
1.9579825398232 71.775
1.97874689253051 71.735
2.01828476254138 71.735
2.04668346715008 71.735
2.06769034378814 71.735
2.10772987798407 71.735
2.12831280589224 71.695
2.16767321692243 71.695
2.1955755724545 71.695
2.21656422456123 71.69
2.25624240067129 71.69
2.27719699414663 71.64
2.31697409691476 71.64
2.34529600869089 71.64
2.36624342175559 71.64
2.40622081750197 71.64
2.42690388460581 71.6
2.46660428206287 71.6
2.49450031963585 71.6
2.51533111575704 71.6
2.55585848209255 71.6
2.57642320005631 71.58
2.61594751334977 71.58
2.64390644789206 71.58
2.66487927683893 71.58
2.70491206093831 71.58
2.72548617287759 71.555
2.76497984452533 71.555
2.79302369499045 71.555
2.81401498003598 71.55
2.85403034917801 71.55
2.87480318235508 71.52
2.91433566796964 71.52
2.94242003907106 71.52
2.96331716373415 71.515
3.00312523635367 71.515
3.02433282899063 71.51
3.06405741885206 71.51
3.09230378680052 71.51
3.11337663460097 71.51
3.15383183398045 71.51
3.17502407206229 71.5
3.214287868461 71.5
3.24236352023187 71.5
3.26336405885384 71.5
3.30353297719372 71.5
3.3242254893273 71.48
3.36322367617333 71.48
3.39101518024144 71.48
3.41193144476499 71.48
3.45189268367574 71.48
3.47277629720555 71.48
3.51135233680611 71.48
3.53895052408975 71.48
3.559442353045 71.48
3.59882931956036 71.48
3.61911920721556 71.475
3.65083027099212 71.475
3.6735533186516 71.475
3.69049985176044 71.475
3.72274273452939 71.475
3.73968784723704 71.45
3.76472115464212 71.45
3.78236859283279 71.45
3.7956931484879 71.45
3.81960928209387 71.45
3.83209502793942 71.44
3.8487178098617 71.44
3.86056967933941 71.44
3.86950682357815 71.44
3.88658523154728 71.44
3.89537755503553 71.44
3.90856509394158 71.44
3.91777026186801 71.44
3.92469473773267 71.44
3.93822549224468 71.44
3.94506734780283 71.44
3.95680538279174 71.44
3.96501686065445 71.44
3.97105383739889 71.44
3.98253092700232 71.44
3.98837679141035 71.43
3.99852536464561 71.43
4.00584853746085 71.43
4.0112423277367 71.43
4.02148292254747 71.43
4.02688609585866 71.43
4.03563242679282 71.43
4.04182139922828 71.43
4.04651753548177 71.43
4.05551617634563 71.43
4.06028236992207 71.43
4.06726337191938 71.43
4.07227462208029 71.43
4.0760721148078 71.43
4.08335995081291 71.43
4.08723152867624 71.41
4.09297857218136 71.41
4.09721592249679 71.41
4.10037497132759 71.41
4.10633340845596 71.41
4.10946471843783 71.4
4.11503390546875 71.4
4.11892399230832 71.4
4.12182287068369 71.4
4.1273987275831 71.4
4.13032518252794 71.4
4.13534841219569 71.4
4.13886526355098 71.4
4.14150914435689 71.4
4.14656872827233 71.4
4.14930236056456 71.4
4.15347874303064 71.4
4.15628143526526 71.4
4.15840851066468 71.4
4.16247485306415 71.4
4.1645424958874 71.4
4.16768960167866 71.4
4.17000027897482 71.4
4.17177178182492 71.4
4.17500084604128 71.4
4.17667125242918 71.4
4.17961489623413 71.4
4.18167416434279 71.4
4.1832380534049 71.4
4.18635732188552 71.4
4.18792200046459 71.4
4.19054350493749 71.4
4.19237583341447 71.4
4.19375611693312 71.4
4.19639333334063 71.4
4.19778896594429 71.4
4.20015224002191 71.4
4.20188798486461 71.4
4.20311344100667 71.4
4.20539343289635 71.4
4.20662013986667 71.4
4.20890082645834 71.4
4.21054839337856 71.4
4.21174340297493 71.4
4.21398919583412 71.4
4.21513551611982 71.4
4.21696306738099 71.4
4.21824889148536 71.4
4.21919025644897 71.4
4.22103056944048 71.4
4.22196771149118 71.4
4.22350536780368 71.4
4.22461180198404 71.4
4.22544149132279 71.4
4.22707834779809 71.4
4.22799122997991 71.4
4.22955271949392 71.4
4.23068675577086 71.4
4.23153521300366 71.4
4.23316856132505 71.4
4.23401017729306 71.4
4.23537141029611 71.4
4.23634684665466 71.4
4.2371177379482 71.4
4.23851922902301 71.4
4.23927459436515 71.4
4.24026910858976 71.4
4.24097468513225 71.4
4.24152102904876 71.4
4.24257103191144 71.4
4.24310444087794 71.4
4.24390462817398 71.4
4.24445612127272 71.4
4.24486726998162 71.4
4.24568664636897 71.4
4.24610206904499 71.4
4.24690017405841 71.4
4.24743788981261 71.4
4.24786190053691 71.4
4.24866927918378 71.4
4.24909006437572 71.4
4.24988042191203 71.4
4.25044837270161 71.4
4.25085782568122 71.4
4.25173438278919 71.4
4.25216486828195 71.4
4.25275560748499 71.4
4.25317757239397 71.4
4.2534967852255 71.4
4.25412415035758 71.4
4.25444406883129 71.4
4.25507952970316 71.4
4.25549150986641 71.4
4.255827406486 71.4
4.25648245029599 71.4
4.25686300294328 71.4
4.25749436124431 71.4
4.25795714946814 71.4
4.25828082407338 71.4
4.25895613640665 71.4
4.25931252217668 71.4
4.25951863351206 71.4
4.25966130269205 71.4
4.25976282946835 71.4
4.259973435398 71.4
4.26007803636738 71.4
};
\addlegendentry{ZSLA}

\addplot [color=plotcolor3, style=\linestyleC] table[]{
0.528269294937647 56.365
0.555999739988177 56.365
0.577871808701316 56.18
0.616376002258055 56.18
0.637492599873896 54.59
0.67546031756451 54.59
0.702848963763542 54.59
0.724070804173172 54.555
0.763544788950992 54.555
0.78417046200053 54.255
0.823289432178669 54.255
0.850598867763384 54.255
0.871695293494271 54.25
0.91218236164399 54.25
0.932997850617555 54.215
0.972767743801197 54.215
1.00031350205293 54.215
1.02164300303809 54.21
1.06188518235289 54.21
1.08307071045222 54.195
1.12236823545329 54.195
1.14993750234758 54.195
1.17096748101786 54.195
1.21110056476902 54.195
1.23252372835803 54.175
1.27285554428716 54.175
1.3002335146852 54.175
1.32131946336185 54.175
1.36228891179035 54.175
1.38346879110548 54.17
1.42362864549886 54.17
1.45147331888499 54.17
1.4726489862274 54.17
1.51399836371241 54.17
1.53496156346969 54.17
1.57489483925763 54.17
1.60243430143065 54.17
1.62351150160292 54.17
1.66366571268219 54.17
1.68464932865531 54.14
1.72428938367717 54.14
1.7517476951757 54.14
1.77282220770824 54.14
1.81307964311645 54.14
1.83384923070415 54.14
1.87344924281129 54.14
1.90059463938038 54.14
1.92159127417827 54.14
1.9628276668625 54.14
1.98408977810744 54.12
2.02398998535109 54.12
2.05112768732065 54.12
2.07212471049525 54.12
2.11204975743778 54.12
2.13262194025709 54.12
2.17205853146927 54.12
2.19917641711002 54.12
2.220301382147 54.12
2.26039369181415 54.12
2.2808458552912 54.12
2.32044057781702 54.12
2.34745971366608 54.12
2.36840943328728 54.12
2.40890702042862 54.12
2.42986556878224 54.12
2.46964564545838 54.12
2.49699709501518 54.12
2.51817858253807 54.12
2.558798389823 54.12
2.57973428651552 54.12
2.62006434654434 54.12
2.64700497924464 54.12
2.66796051539959 54.12
2.70790420624736 54.12
2.72886546047437 54.12
2.76829102032942 54.12
2.79525073818925 54.12
2.8163418743388 54.12
2.8571218678695 54.12
2.87781826012341 54.12
2.91744256127326 54.12
2.94478544466216 54.12
2.96585747377181 54.12
3.00561248282846 54.12
3.02646570086679 54.12
3.06591617700205 54.12
3.09332261377158 54.12
3.1143053745866 54.12
3.15446737690373 54.12
3.17526022150965 54.12
3.21447002260608 54.12
3.24166609230675 54.12
3.26257283521924 54.12
3.30260057224481 54.12
3.32325283177673 54.12
3.36242931088115 54.12
3.3896092802705 54.12
3.41047777263589 54.12
3.45087273168858 54.12
3.472103092682 54.12
3.51195033482448 54.12
3.53447152505868 54.12
3.55198682035444 54.12
3.58445672298311 54.12
3.60095263882826 54.12
3.61358165738756 54.12
3.62203133415968 54.12
3.62846815657453 54.12
3.64040113478933 54.12
3.64659458905551 54.12
3.64956570367477 54.12
3.65153247248634 54.12
3.65304679337593 54.12
3.65579916438837 54.12
3.65720275597945 54.12
3.65917269379906 54.12
3.66048806188354 54.12
3.66149918696163 54.12
3.66347929368682 54.12
3.66450955680201 54.12
3.6658561062482 54.12
3.6667901175815 54.12
3.66749293172629 54.12
3.66884051866706 54.12
3.66954559925432 54.12
3.67050992585105 54.12
3.67115829249908 54.12
3.67166816547254 54.12
3.67265444173492 54.12
3.67315072334706 54.12
3.67391468767766 54.12
3.67444048591943 54.12
3.6748368817836 54.12
3.6755723232184 54.12
3.67596882483773 54.12
3.67677738320163 54.12
3.67729334439022 54.12
3.67769612567797 54.12
3.67849905896142 54.12
3.67893723540697 54.12
3.67934751619421 54.12
3.67960202692526 54.12
3.67980110918928 54.12
3.68018014297368 54.12
3.68037630056054 54.12
3.6807422717594 54.12
3.68099696300356 54.12
3.68119545632075 54.12
3.68153622861786 54.12
3.68174463009945 54.12
};
\addlegendentry{SSLA}

\end{axis}
\end{tikzpicture}
\caption{\label{fig:speedup-M2M}In a three-color graph coloring experiment, when using an SSLA for initialization, the MGM-2 algorithm has a great benefit in speed and solution quality.}
\end{center}
\end{figure}
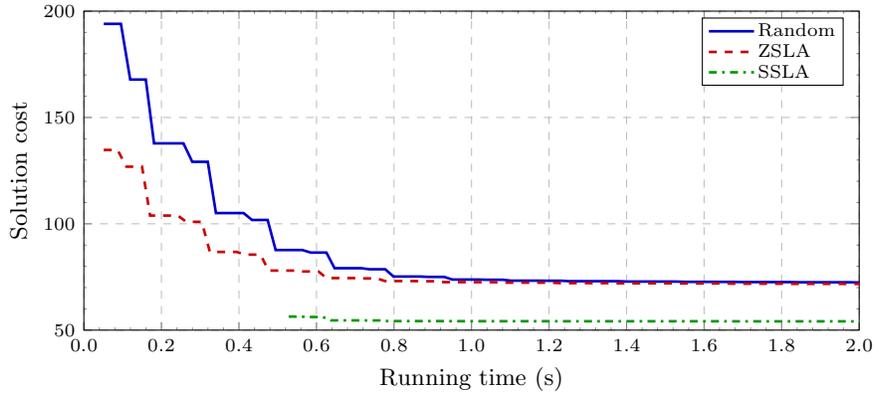

\begin{table}[t]
\centering
\caption{\label{tab:graphcoloring}Graph coloring experiment results}
\begin{tabular}{lccccc}
\toprule
Algorithm & I & S & M$^*$ & E$^*$ & T \\ \midrule
Random\_DSA & 157 & 49 & 10.4 & 362.9 & 3.5 \\
ZSLA\_DSA & 164 & 49 & 10.7 & 381.6 & 3.7 \\
SSLA\_DSA & 115 & 47 & \textbf{14.5} & 381.3 & 3.1 \\ \midrule
Random\_MGM2 & 55 & 72 & 26.3 & 120.2 & 1.7 \\
ZSLA\_MGM2 & 42 & 71 & 21.0 & 94.4 & 1.3 \\
SSLA\_MGM2 & 7 & 54 & 12.2 & 121.0 & 0.7 \\ \bottomrule
\end{tabular}

* = $\times 10^3$
\end{table}

\subsubsection{Experiment 1: Symmetric DCOP}
We use a (symmetric) graph-coloring problem with three colors, which have to be assigned to 200 variables. The constraints between the variables are chosen as the nodes were connected via a Delaunay triangulation, when the variables are points chosen randomly on a two-dimensional plane. The results of \mbox{MGM-2} ($p=0.5$) is shown in Figure~\ref{fig:speedup-M2M}, of which the numeric results are shown in Table~\ref{tab:graphcoloring} together with DSA (variant C, with $p=0.5$).

\subsubsection{Experiment 2: Asymmetric DCOP}
An asymmetric problem is chosen where the constraints are created using a scale-free graph generation method~\cite{Albert2002}, and are assigned semi-random asymmetric costs such that there is a high probability that a conflict of interests occurs. This problem is created specifically to benchmark asymmetric problems, and is described in more detail in~\cite[Section 5.2]{Grinshpoun2013}. The result of the ACLS ($p=0.5$) algorithm is shown in Figure~\ref{fig:speedup-ACLS}, and the results of all algorithms is presented in Table~\ref{tab:semirandom}.

\begin{figure}[t]
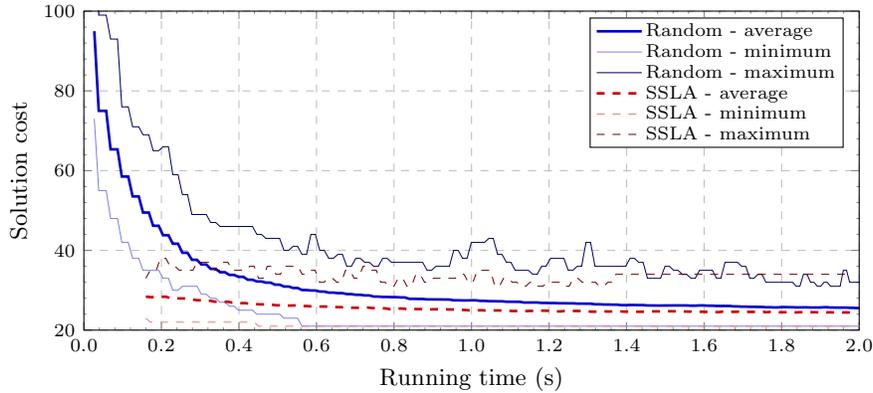

\begin{center}


* = $\times 10^3$
\end{table}

\paragraph{Hybrid Solvers---Discussion of Initial of Results:}
Based on the results from the first two experiments, we see that local search DCOP solvers reduced their execution time and communication overhead using initialization methods other than random, but surprisingly also found a final solution with a lower cost. The only two exceptions (marked in bold in Table~\ref{tab:graphcoloring} and~\ref{tab:semirandom}) are (i) when using the SSLA with DSA, in which case the messages of the SSLA increases the very low communication overhead of DSA, and (ii) when using an SSLA with ACLS, in which case the added run time of the SSLA increases the convergence time. The speed performance gain can be easily explained: a better initialization will reduce the amount of variable ``tweaking'' needed. However, how come the final solution is also \emph{lower} and solution result dependent on the initialization method?

\subsection{Why Do Only Some Hybrid DCOP Solvers Work?}

We introduce three hypotheses to explain this phenomenon:
\begin{enumerate}
\item \textbf{Hypothesis 1:} A lower initial solution will always lead to a lower final solution;
\item \textbf{Hypothesis 2:} Using initialization methods other than random increases the explored portion of the solution space;
\item \textbf{Hypothesis 3:} The initialization method itself finds a starting point in the search space, from which a lower local minimum is reachable.
\end{enumerate}
Let us experimentally verify these three hypotheses in detail.

\subsubsection{Verifying Hypothesis 1: Solution Cost Correlation}

The SSLA algorithm finds a lower initial cost than the ZSLA initializer, which in turn finds a lower cost than a random assignment. Hence the first hypothesis is that a lower initial costs will (on average) lead to lower final costs. The initial state is known to be of great influence on the final solution, and a correlation between the initial cost and the final cost could explain why ZSLA or SSLA initialization methods lead to better final solutions.

To test this hypothesis we performed an experiment by repeatedly invoking the algorithms on the exact same problem setup, but with different random initializations. We gather information on the cost at initialization and of the eventual outcome. In Figure~\ref{fig:best-repeated-DSA} we show the minimum, average, and maximum results of 200 instantiations of the algorithms solving the exact same three color graph coloring problem with a Delaunay graph of size $n=100$. For this small experiment we only compare the DSA algorithm instantiated with a randomized approach with an algorithm that uses CoCoA\_WPT for initialization.

\begin{figure}[t]
\begin{center}

\caption{\label{fig:best-repeated-DSA}Starting the DSA algorithm from a different starting point will lead to a different outcome. This graph shows the minimum, average and maximum solution costs during the experiments.}
\end{center}
\end{figure}

From this experiment we see that for some iterations we \emph{do} find a solution with the random strategy which is as good as the SSLA-initialized solution, however on average the final solution is worse. The spread of the initial and the final solution corresponds with the statistical spread of the random assignments, and some random initialization lead to better final solutions than others. If we look at the correlation between the initial cost and the final cost of the individual runs, then we find that there is no correlation between the cost of the initial random assignment and the final minimal cost. The Pearson correlation coefficient between the solution cost at the beginning and the end is 0.15. With these results we \emph{reject hypothesis 1}.

\subsubsection{Verifying Hypothesis 2: Increased Solution Space Exploration}

A DCOP problem is generally a matter of solution space exploration. The solvers that are capable of effectively searching a larger portion of the solution space, will reasonably find a better final solution than solvers that cannot. Since local search algorithms search only a small fraction of the search space, the increase in search space exploration by a SSLA may be of large influence. Put differently, a better overview of the trends in the solution space may lead to insights as to where the best optimum lies. If we can show that the solvers using SSLA explore a larger part of the solution space than randomly initialized solvers, this may explain why they find solutions with the lower final cost.

To verify this, we construct an experiment in which we captured the CPA \textbf{every} time a constraint check is performed, so that we can store every explored value assignment. We did observe that SSLA searches a slightly larger portion of the solution space than ZSLA, which in turn searches a larger part of the solution space than random. However, the SSLA algorithm sometimes searches a \textbf{smaller} part of the solution space, largely because its successor algorithm converges so quickly. Therefore, with these results we \emph{reject hypothesis 2} as well, also because the differences are so marginally small that they cannot explain the significant effect on the results.

\subsubsection{Verifying Hypothesis 3: Selection of Starting Point}
The initial assignment determines the area of the solution space that is reachable through local search. It is possible that SSLA is capable of finding initial assignments that have relatively good local minima. To explain what we mean by this, let us sketch the following example.

\begin{definition}
A \textbf{bridge edge} is defined as an edge that, when removed from the graph, the graph will no longer be connected.
\end{definition}

Suppose we have a graph with high modularity, meaning it consists of clusters of densely connected nodes, which are connected through relatively low number of \textit{bridge} edges. The nodes on these bridges could initially induce a high performance penalty; and it may be impossible for a local search algorithm to escape from these expensive assignments because of the many low cost constraints around it on the surrounding nodes.

\begin{figure}[t]
\centering
\begin{tikzpicture}
\tikzset{VertexStyle/.style ={shape = circle,minimum size = 13pt,inner sep = 1pt,font = \small,draw}}
\begin{scope}[shift={(-2,0)}]
\grEmptyCycle[RA=1.5,prefix=]{10}
\EdgeInGraphFromOneToComp{}{10}{0}
\EdgeInGraphLoop{}{10}
\end{scope}
\begin{scope}[rotate=180,shift={(-2,0)}]
\grEmptyCycle[RA=1.5,prefix=1]{10}
\EdgeInGraphFromOneToComp{1}{10}{0}
\EdgeInGraphLoop{1}{10}
\end{scope}
\Edge(0)(10)
\end{tikzpicture}
\caption{\label{fig:bridge_topology}An example graph in which two dense clusters of nodes are connected by a single bridge.}
\end{figure}
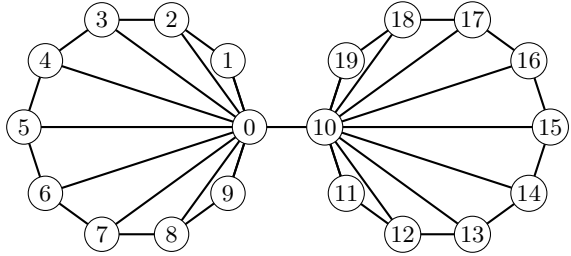

As a minimal example suppose we have a three color graph coloring problem with a graph as shown in Figure~\ref{fig:bridge_topology}. As we see node~0 and 10 have a constraint with many other nodes, and are connected to one another. If through some unfortunate random assignment, they are both given the same initial color (for example red) and the nodes around it are mostly other colors, then no local search algorithm will change that initial assigned color. This is confirmed through experiments in which we use the graph as depicted in Figure~\ref{fig:bridge_topology}, letting the algorithms solve the graph-coloring problem. One group of agents is hardwired to intialize with an initialization in which $\var{0} = \var{10}$, and $\forall_{i \neq 10}{\var{i} \neq \var{0}}$. As we can see in Figure~\ref{fig:manualgraph}, the local search algorithm is unable to find a solution in which this constraint is resolved. However, we can guarantee that an SSLA algorithm will never assign the same color to endpoint vertices of a bridge, and will thus lead to better solutions.

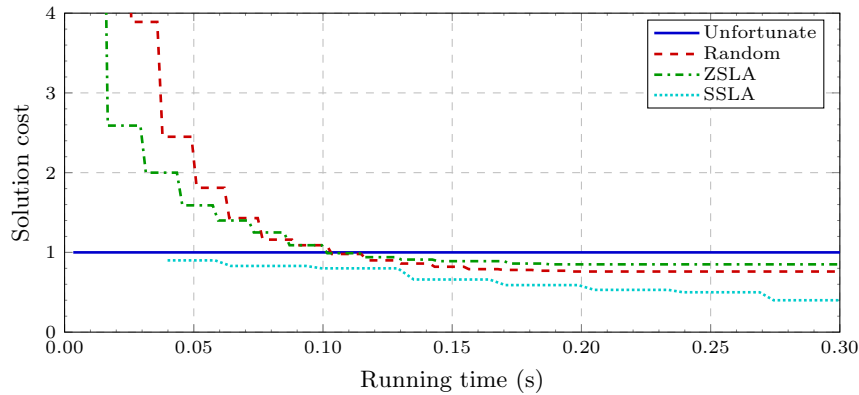
\begin{figure}[t]
\begin{center}
\begin{tikzpicture}

\begin{axis}[
xmin=0,
xmax=0.3,
xlabel={Running time (s)},
ymin=0,
ymax=4,
ylabel={Solution cost},
x tick label style={
/pgf/number format/.cd,
fixed,
fixed zerofill,
precision=2,
/tikz/.cd
},
legend style={legend cell align=left, align=left, draw=black}
]
\addplot [color=plotcolor1, style=\linestyleA] table[]{
0.00336255212540365 1
0.00672471040429439 1
0.00989008035562881 1
0.0118974285104761 1
0.0148807757317614 1
0.0180755528893929 1
0.0212799683464099 1
0.0232631934023416 1
0.0263495059596636 1
0.0295575900240137 1
0.0328070425591271 1
0.0347523270676486 1
0.0379263069413625 1
0.0413562104672004 1
0.0448109190298977 1
0.0467843125099146 1
0.0500275036148181 1
0.0534626511340409 1
0.0569076083488167 1
0.0588462558142503 1
0.062168292803002 1
0.0657843982079984 1
0.0693163809152191 1
0.0712661217240996 1
0.0747933126320799 1
0.0785279585440078 1
0.0821988232012063 1
0.0841606857305397 1
0.0877534156156495 1
0.0917234869273955 1
0.0954362305971333 1
0.0974408035927554 1
0.10096665250521 1
0.104757877386099 1
0.108527309426607 1
0.110477557130536 1
0.114045193887356 1
0.11781051241984 1
0.121506185760625 1
0.12346156805613 1
0.127074296591951 1
0.130900613926486 1
0.134582396884236 1
0.136563630827242 1
0.140131457213864 1
0.143797034116954 1
0.147442765531866 1
0.149387218586638 1
0.153897479564654 1
0.157867649338028 1
0.161523059166322 1
0.163517206169533 1
0.167028023273412 1
0.170839162930291 1
0.174549313777152 1
0.17650755875333 1
0.180063963591078 1
0.183711146403324 1
0.187381952712891 1
0.189326865255262 1
0.192754606272006 1
0.197463289311261 1
0.201118389167762 1
0.203090845438947 1
0.206620753158521 1
0.210342697520408 1
0.214030304301132 1
0.215991751103591 1
0.219615062441082 1
0.223378798294061 1
0.227015223261742 1
0.229027403329826 1
0.232538001630087 1
0.236359563632651 1
0.240162246529683 1
0.242091857405682 1
0.245558997660625 1
0.249237545971552 1
0.252929864323523 1
0.254883733227335 1
0.258475919750126 1
0.26225993869852 1
0.265937400284809 1
0.267929691103992 1
0.271495139824629 1
0.275244215835182 1
0.278897029193873 1
0.280852298440842 1
0.284289294850639 1
0.287924669560952 1
0.291647485490585 1
0.293585308795723 1
0.297108535711485 1
0.300667296334857 1
0.304199330096255 1
0.306158413819637 1
0.309632220291483 1
0.313310075724286 1
0.316961113126941 1
0.319151898941902 1
0.322607734343234 1
};
\addlegendentry{Unfortunate}

\addplot [color=plotcolor2, style=\linestyleB] table[]{
0.00498889024628171 11.05
0.00849596216156094 11.05
0.0117600052512868 11.05
0.013723173308876 6.47
0.0169928214179568 6.47
0.0204747527974954 6.47
0.0238712814780914 6.47
0.0258180976119408 3.89
0.0290457682468542 3.89
0.0325231339242247 3.89
0.0358647538368126 3.89
0.0378992701075967 2.45
0.0422318661942939 2.45
0.0457352403284242 2.45
0.049099947669468 2.45
0.0511259014253232 1.81
0.0544919617020733 1.81
0.0581566415103284 1.81
0.0618153151592617 1.81
0.0637432594810343 1.43
0.0672598748807976 1.43
0.0710609313376012 1.43
0.0747472034162538 1.43
0.0766630734250242 1.16
0.0803684835268226 1.16
0.084212881333681 1.16
0.0880295931893727 1.16
0.0899909779974728 1.09
0.0937291211205663 1.09
0.097653994898229 1.09
0.10142410887668 1.09
0.103450981607733 0.98
0.107120413101231 0.98
0.111001726725221 0.98
0.114842346522937 0.98
0.116820334878938 0.9
0.120490805689623 0.9
0.124317487696855 0.9
0.128015214144925 0.9
0.130025993869852 0.86
0.133603079296255 0.86
0.137451966224015 0.86
0.141260297901126 0.86
0.143189679033326 0.82
0.146827208959279 0.82
0.150749427919706 0.82
0.154472010458813 0.82
0.156417645053124 0.79
0.1600272848112 0.79
0.163988709733297 0.79
0.167690626270656 0.79
0.169662203680641 0.78
0.173272142470329 0.78
0.177085371701763 0.78
0.180841062875043 0.78
0.182797367792472 0.77
0.18648964238372 0.77
0.190443719150969 0.77
0.194283193876416 0.77
0.196265313974075 0.76
0.199947490778339 0.76
0.203809943530433 0.76
0.207612670188189 0.76
0.209609628817895 0.76
0.213272397741217 0.76
0.217098310289058 0.76
0.22083939267409 0.76
0.222821972259348 0.76
0.226480977760436 0.76
0.230297303063068 0.76
0.23413655533817 0.76
0.236121848088295 0.76
0.239641581439618 0.76
0.243362358848874 0.76
0.247255090375011 0.76
0.249266267593178 0.76
0.252845734332293 0.76
0.25665528766294 0.76
0.260427345346868 0.76
0.262466773758882 0.76
0.266066246442162 0.76
0.269900816319833 0.76
0.273607593944245 0.76
0.27562137492547 0.76
0.27917441748095 0.76
0.283145542696791 0.76
0.286725731487846 0.76
0.288663226587557 0.76
0.292214248856295 0.76
0.29583304189907 0.76
0.299417482773773 0.76
0.30137427270589 0.76
0.304884440692368 0.76
0.308551224662085 0.76
0.312150762986451 0.76
0.314124944159493 0.76
0.317658899746005 0.76
0.32133745170366 0.76
0.325100447271063 0.76
0.327508588953699 0.76
0.331153875467921 0.76
0.334856098330346 0.76
0.338534766983263 0.76
0.340493847059918 0.76
0.344043789897472 0.76
0.347713520422583 0.76
0.351229712802017 0.76
0.353179482784714 0.76
0.356622372305297 0.76
0.360135220636098 0.76
0.363566852710521 0.76
0.36537898063041 0.76
0.367985372978118 0.76
0.370746805923014 0.76
0.373502557267289 0.76
0.374986392238306 0.76
0.377134401945894 0.76
0.379445938184331 0.76
0.381715923615657 0.76
0.382947948442574 0.76
0.384683546150242 0.76
0.386517729474853 0.76
0.388391177108766 0.76
0.389380975390063 0.76
0.390641955229133 0.76
0.391918849384706 0.76
0.393194817271628 0.76
0.393888504969577 0.76
0.394844403276949 0.76
0.395884683199711 0.76
0.396859792464768 0.76
0.397397586231418 0.76
0.398122289342258 0.76
0.398923431497146 0.76
0.399678902043444 0.76
0.400047217820826 0.76
0.400512686051452 0.76
0.401028406179744 0.76
0.401459354492859 0.76
0.401704210328625 0.76
0.402030231366593 0.76
0.402370274069766 0.76
0.402722234276681 0.76
0.402915160720375 0.76
0.403110435656238 0.76
0.403315775558542 0.76
0.40352769780303 0.76
0.403644024746689 0.76
0.403739992013668 0.76
0.403855797475371 0.76
0.403959732840782 0.76
0.40401591796323 0.76
0.404082656713533 0.76
0.404153800710018 0.76
0.404217662192741 0.76
0.404260051747056 0.76
0.404297478105963 0.76
0.40433721648977 0.76
0.404377056981932 0.76
0.404393751698007 0.76
};
\addlegendentry{Random}

\addplot [color=plotcolor3, style=\linestyleC] table[]{0.00807821500008206 7.41
0.0113594050000274 7.41
0.0146470971141626 7.41
0.0167163995135266 2.59
0.0208858665626134 2.59
0.0251599764421438 2.59
0.0293464773529138 2.59
0.0314461679281303 2
0.0353317883366731 2
0.0394337909367895 2
0.0434642739275432 2
0.0455680707173294 1.59
0.0493067681429225 1.59
0.0533125518518991 1.59
0.0573343301053722 1.59
0.0594153567319491 1.4
0.0633251148263155 1.4
0.0672824153002076 1.4
0.0711822542972119 1.4
0.0732395407312052 1.25
0.0770166965394384 1.25
0.0809915815307865 1.25
0.0849312026723215 1.25
0.0870578425598565 1.09
0.091113068593111 1.09
0.0953816901485495 1.09
0.0995562334415804 1.09
0.101641311581823 0.99
0.105862193834479 0.99
0.110243247629172 0.99
0.114175079361896 0.99
0.116252466554955 0.94
0.120235243063469 0.94
0.124220882252656 0.94
0.128288718667778 0.94
0.130352313939431 0.91
0.134075585709936 0.91
0.138097210800876 0.91
0.142324445651916 0.91
0.144410278664642 0.89
0.148275521162868 0.89
0.152431247344727 0.89
0.156503237381869 0.89
0.158598424249276 0.89
0.162377450828445 0.89
0.166268658022708 0.89
0.170132748155212 0.89
0.17217313565642 0.86
0.176052501928207 0.86
0.181389289927558 0.86
0.185566068664222 0.86
0.187601952457621 0.85
0.191357847847611 0.85
0.195501317380118 0.85
0.199393691527012 0.85
0.201523679109907 0.85
0.205334195176474 0.85
0.20927821062401 0.85
0.213238377425301 0.85
0.215397794094855 0.85
0.219182721078264 0.85
0.223083442583196 0.85
0.226830709817171 0.85
0.228838360650357 0.85
0.23260039348184 0.85
0.236473443622513 0.85
0.240302350133197 0.85
0.242296285626243 0.85
0.245996488201926 0.85
0.249873815953337 0.85
0.25386168693943 0.85
0.256019612097652 0.85
0.259716773303041 0.85
0.263634149410051 0.85
0.267673333491358 0.85
0.269730204198477 0.85
0.273475356330809 0.85
0.277358121352133 0.85
0.281485144146 0.85
0.283591887491179 0.85
0.287238636342916 0.85
0.291171449045196 0.85
0.294984145854492 0.85
0.297098766129929 0.85
0.300805835492499 0.85
0.30472091051479 0.85
0.308712198484056 0.85
0.310972027780766 0.85
0.314889345540144 0.85
0.318840027204583 0.85
0.322736044431721 0.85
0.324852867330249 0.85
0.32861479440665 0.85
0.332593672564032 0.85
0.33647297684146 0.85
0.338607902092674 0.85
0.342447238242496 0.85
0.346348492169566 0.85
0.350234564772253 0.85
0.352679126317152 0.85
0.357547123917606 0.85
0.361440187295897 0.85
0.365324232318389 0.85
0.36758387927875 0.85
0.370750449003258 0.85
0.374059871963416 0.85
0.377381949066165 0.85
0.379101431887345 0.85
0.381348975360889 0.85
0.383661638437961 0.85
0.386014116480106 0.85
0.387275450051692 0.85
0.388994940166327 0.85
0.390803979308471 0.85
0.392649489367056 0.85
0.393655431708656 0.85
0.395072983770242 0.85
0.396616398966518 0.85
0.398102874167863 0.85
0.39888558211791 0.85
0.400013602291603 0.85
0.401160384875565 0.85
0.402428814977837 0.85
0.403078074601094 0.85
0.403993494239083 0.85
0.404881552484606 0.85
0.405703043375994 0.85
0.406109948818187 0.85
0.406468950854884 0.85
0.40690410019747 0.85
0.407278144982924 0.85
0.40748709514493 0.85
0.407764848104705 0.85
0.408032255300062 0.85
0.408303509792374 0.85
0.408456730672803 0.85
0.408665680834809 0.85
0.408902419056337 0.85
0.409116602271546 0.85
0.409231269954434 0.85
0.409367679423525 0.85
0.409509103142202 0.85
0.409647077057164 0.85
0.409723297297593 0.85
0.409869965009655 0.85
0.410055040050179 0.85
0.410193112426769 0.85
0.410264475226872 0.85
0.410292179411674 0.85
0.410323140123661 0.85
0.4103577585028 0.85
0.410376258348726 0.85
};
\addlegendentry{ZSLA}

\addplot [color=plotcolor4, style=\linestyleD] table[]{
0.0398528436265241 0.9
0.0490106247390311 0.9
0.0582187963248285 0.9
0.0643233261067361 0.83
0.073583416144425 0.83
0.0834236019816314 0.83
0.0933965943216814 0.83
0.0994773073297389 0.8
0.108664780093247 0.8
0.118809493159652 0.8
0.128945636417674 0.8
0.135118268825772 0.66
0.144950187532935 0.66
0.154945173283349 0.66
0.164399462472444 0.66
0.17056101977073 0.59
0.1797313565642 0.59
0.189207919961636 0.59
0.198737674518678 0.59
0.205000869744383 0.53
0.214406978376732 0.53
0.22393030740085 0.53
0.233666207786856 0.53
0.239276456548337 0.5
0.2498092360654 0.5
0.259662466244983 0.5
0.268932603015479 0.5
0.27438198371007 0.4
0.283989723523395 0.4
0.293484677364948 0.4
0.303402640595 0.4
0.309260017103149 0.38
0.318434667974626 0.38
0.328142495856406 0.38
0.337865370133671 0.38
0.343601828468903 0.37
0.352731405795014 0.37
0.362725837242928 0.37
0.372489642383719 0.37
0.37829544687904 0.36
0.387367026659397 0.36
0.396882372998175 0.36
0.406634132999779 0.36
0.412361255713966 0.34
0.421042340323501 0.34
0.430053154692335 0.34
0.439697018983037 0.34
0.444910573137844 0.3
0.453731097646585 0.3
0.463743332415574 0.3
0.473151486861754 0.3
0.47899042187161 0.3
0.487776867716803 0.3
0.49685056624553 0.3
0.508250789060548 0.3
0.513636643771299 0.3
0.522810390254487 0.3
0.532680337030507 0.3
0.541582161670347 0.3
0.547495905637293 0.29
0.556340338088057 0.29
0.566158767552153 0.29
0.575239190645416 0.29
0.58058024896205 0.29
0.58954956357795 0.29
0.599298227508356 0.29
0.608895895061785 0.29
0.614470548121298 0.29
0.623177808937034 0.29
0.632628874419487 0.29
0.641905739401244 0.29
0.647252391797052 0.29
0.65577340332618 0.29
0.665228852174452 0.29
0.675190350031088 0.29
0.68094518422353 0.29
0.689978910977924 0.29
0.700037831145601 0.29
0.709285361855601 0.29
0.714736088192445 0.29
0.724571175905346 0.29
0.733646785319007 0.29
0.743080496027802 0.29
0.748952182292588 0.29
0.758177066828095 0.29
0.76767882181545 0.29
0.777131283994333 0.29
0.782731788701346 0.29
0.792278927205859 0.29
0.801612695715278 0.29
0.811094477579011 0.29
0.81676730417532 0.29
0.82892069280519 0.29
0.840353163626816 0.29
0.849681863185744 0.29
0.855422281866468 0.29
0.864215558031278 0.29
0.874086875976639 0.29
0.883093055355492 0.29
0.889572439496241 0.29
0.89847509558983 0.29
0.903193548210642 0.29
0.908123295838902 0.29
0.911591194613054 0.29
0.916023182243356 0.29
0.92049408409717 0.29
0.925266541097701 0.29
0.928097225387783 0.29
0.932201510838984 0.29
0.936369635892545 0.29
0.940292328927479 0.29
0.942978515308048 0.29
0.946136967418317 0.29
0.949417745338115 0.29
0.952603526020309 0.29
0.954826567135331 0.29
0.957928013609585 0.29
0.961615193723253 0.29
0.964885089809768 0.29
0.966556366547114 0.29
0.969143000928091 0.29
0.971582741499935 0.29
0.973611747566265 0.29
0.974981997932305 0.29
0.977278433074354 0.29
0.979287568125418 0.29
0.981360057764155 0.29
0.982675596285443 0.29
0.983834230732062 0.29
0.985022921502378 0.29
0.985933728030748 0.29
0.986596338321447 0.29
0.987511517275457 0.29
0.988400118883298 0.29
0.989378499262449 0.29
0.98997958926914 0.29
0.990561690768492 0.29
0.991221744703584 0.29
0.991799408136212 0.29
0.992245880566044 0.29
0.992952747535268 0.29
0.993638602063682 0.29
0.994317053736345 0.29
0.994679356060951 0.29
0.995073702175454 0.29
0.99557802628196 0.29
0.995987604775024 0.29
0.996347274162756 0.29
0.996437589002929 0.29
0.996540211546631 0.29
0.996632845705158 0.29
0.996709047711952 0.29
0.996896230560665 0.29
0.997040196048041 0.29
0.997093423674915 0.29
0.99713445300007 0.29
0.997268758307699 0.29
0.997381912598894 0.29
0.997491671787278 0.29
0.997592529315126 0.29
};
\addlegendentry{SSLA}

\end{axis}
\end{tikzpicture}
\caption{\label{fig:manualgraph}The results of the MCS-MGM algorithm trying to solve the graph coloring problem in graph from Figure~\ref{fig:bridge_topology}, with various initialization strategies. The ``\emph{unfortunate}'' strategy is hardwired to get a conflict on the bridge.}
\end{center}
\end{figure}

\begin{proposition}
A SSLA will never assign the same color to the endpoints of a bridge.
\end{proposition}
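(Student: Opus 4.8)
The plan is to combine a structural fact about how SSLA's activation wave crosses a bridge with a short optimization argument about the value chosen at the second endpoint. Let $(u,v)$ be the bridge; by definition, removing it splits the graph into two connected components $C_u \ni u$ and $C_v \ni v$, and $(u,v)$ is the only edge joining them.

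First I would establish the structural lemma: in any SSLA run, exactly one endpoint of the bridge receives its value strictly before the other, and at the moment the later endpoint is assigned, the earlier endpoint is its only already-assigned neighbor. This follows from the activation discipline described for $k$ step look-ahead: SSLA triggers a single initial node, and a node chooses a value only after an already-assigned neighbor activates it, so assignments propagate along a tree of activations rooted at the initial node. Since $(u,v)$ is the unique edge between $C_u$ and $C_v$, the activation wave can cross between the components only along this edge; assuming without loss of generality that the initial node lies in $C_u$, the component $C_v$ is entered exactly once, namely when $u$ activates $v$. Hence $v$ is the first node of $C_v$ to be assigned, and every neighbor of $v$ other than $u$ still lies in $C_v$ and is unassigned when $v$ chooses its color. (If the initial node instead lies in $C_v$, the same reasoning applies with the roles of $u$ and $v$ exchanged, so the later endpoint always sees the earlier one as its sole assigned neighbor.)

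Next I would prove the optimization lemma using color symmetry. Write $R(c)$ for the regional cost SSLA evaluates when the later endpoint $v$ considers color $c$: its own constraint costs plus the minimum induced cost reported by each one-hop neighbor. The only assigned neighbor of $v$ is $u$, contributing the bridge-conflict term $\mathbf{1}(c = \text{color}(u))\, w_{uv}$ with $w_{uv} > 0$. Every other neighbor of $v$ is an unassigned node of $C_v$ whose sole assigned neighbor is $v$ itself (no node of $C_v$ is adjacent to $u$, again because the bridge is unique), so each such neighbor can report induced cost $0$ by picking any color different from $c$, and these contributions are therefore independent of $c$. Consequently $R(c)$ is minimized exactly when $c \neq \text{color}(u)$, and since at least two colors are available such a $c$ exists. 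SSLA thus assigns $v$ a color different from $u$, which is the claim.

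The main obstacle I anticipate is making the optimization lemma airtight for the precise SSLA/CoCoA cost model: I must argue that the one-step look-ahead into $C_v$ cannot reward matching $u$'s color. The clean way is to invoke the color-symmetry of the graph-coloring cost together with the fact that, at the instant $v$ is colored, nothing in $C_v$ is yet fixed, so every look-ahead term is invariant under permuting colors and hence cannot discriminate against the color-specific bridge term. I would also note that the argument needs only $\ge 2$ colors and is insensitive to the activation order (for example CoCoA's unique-first rule), since both lemmas hold for any propagation order.
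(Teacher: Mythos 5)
Your proof is correct and follows essentially the same route as the paper's sketch: you establish that the activation wave can cross the bridge only once, so the later endpoint is assigned while the earlier endpoint is its sole already-assigned neighbor, and then conclude that it must choose a different color. The only difference is that you make that final step explicit with the color-symmetry argument on the one-step look-ahead cost (the look-ahead terms into the still-unassigned component are independent of the candidate color), whereas the paper's sketch simply asserts it; this is a welcome tightening rather than a different approach.
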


\begin{proof} (\emph{Sketch})
When an SSLA starts, any random agent is activated first. If either bridge endpoint is selected first, then logically they will not be activated simultaneously. If any other node is selected, then this node will execute the algorithm and select a random initial assignment. After that it activates all of its neighbors, which will execute the algorithm, until at some iteration the first bridge endpoint is selected. At this moment the other endpoint cannot be activated, or the edge would not have been a true bridge.

Because the algorithm is active in one bridge endpoint, but never in both at the same time, one must assign a value before the other. Moreover, when an endpoint of the bridge eventually has to assign a value, no nodes from the other component can be assigned a value yet, because the bridge is the only connecting edge. Therefore, when the second bridge endpoint is activated it will only have the first endpoint as a constraining value, and will thus always pick a different value.
\end{proof}

Although this exact order of events will not hold for pseudo-bridges that connect clusters within a graph, there will be an ordering in which the nodes will be activated, as long as the detour path between the vertices on the pseudo-bridge is longer than three. In many graphs with high modularity, the values of nodes in the bridging constraints will therefore be chosen with low costs, and the coloring within the clusters can simply be permuted. Therefore, the local search algorithms that continue from these solutions are generally of higher quality, than those from random initial assignments.

If this final hypothesis is true, then we expect some different results in the performance of different types of graphs, especially for various densities. We would expect the benefit of an SSLA to decrease with problem graphs of higher densities, since in these graphs bridges or pseudo-bridges occur less frequently.

\section{Graph Density}

\begin{sidewaysfigure}[p]
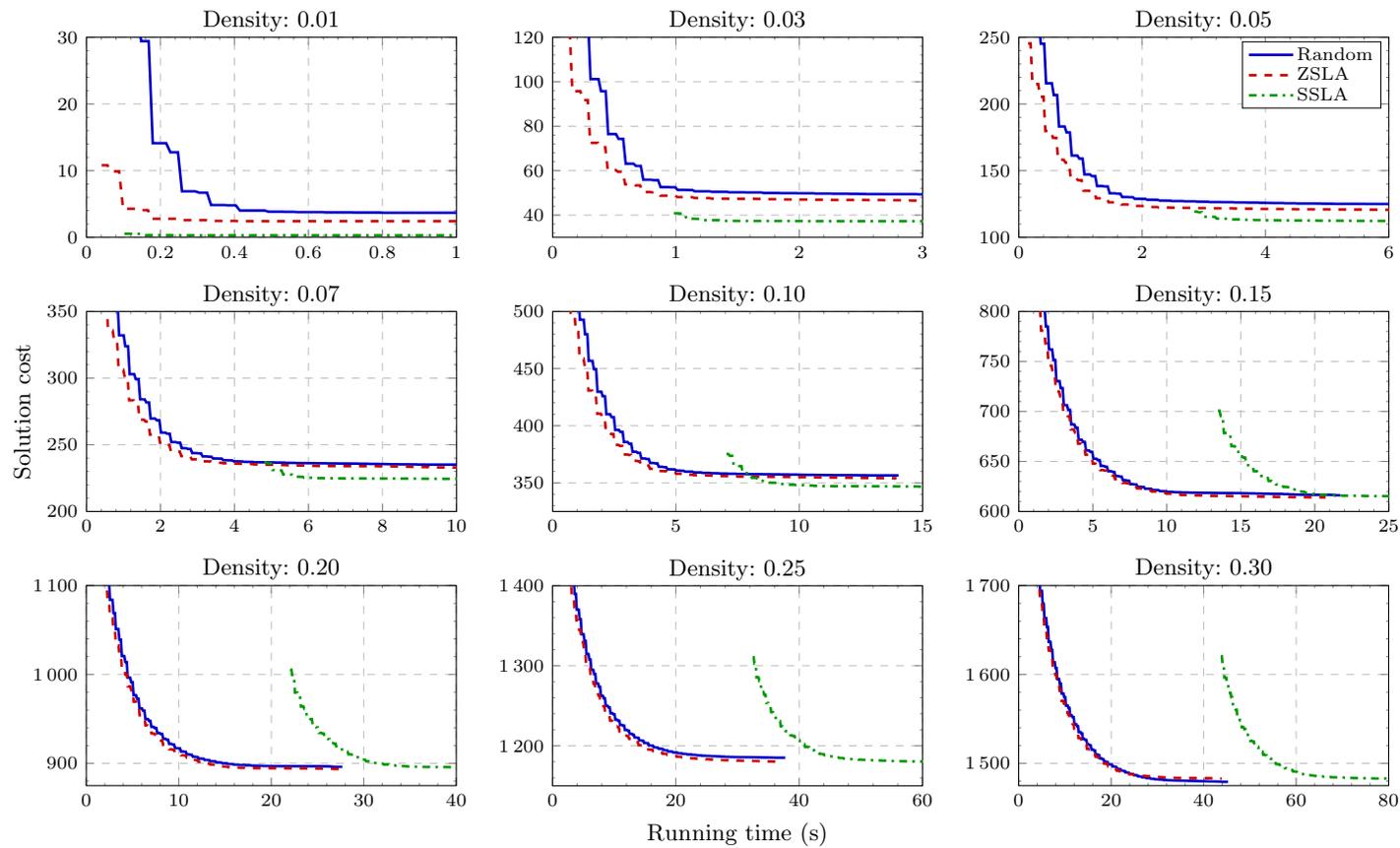

\begin{center}

\caption{\label{fig:densities}The solution cost versus the time of the MGM-2 algorithm when solving random graphs of different densities shows the impact of initialization methods. Up to a density of $0.1$ there is a clear improvement on the solution cost.}
\end{center}
\end{sidewaysfigure}

In our final experiment we use once more the graph coloring problem with $\domainsize{} = 3$, and instantiate randomly connected graphs with $n=200$ with nine varying densities between $0.01$ and $0.3$. We generate 50 graphs of every density, let the different solver combinations (initialization and iteration) solve the same graphs, and report the average performance of all instances. The convergence criteria were identical to the experiment described in Section~\ref{sec:first_results}.

In Figure~\ref{fig:densities} we show the averaged results for the MGM-2 solver, when solving the graphs with different densities. We can indeed conclude that for graphs with a low to medium density (up to $0.1$), there is a benefit using SSLA initialization for the final solution cost. The increase in convergence speed deteriorates much faster, since the complexity of the SSLA is exponential with the node degree, and this increases with graph density. Note, the time the SSLA initialization takes is shown as the starting point of the line.

For other local search algorithms (DSA, ACLS, MCS-MGM), similar results were found.

\section{Discussion}
In this article we studied the effect of different initialization strategies on the performance of different DCOP algorithms. Particularly, we introduced a new class of hybrid algorithms which combine the strategies of SSLA algorithms with iterative local search algorithms. We found that using this combination not only combines the fast convergence of the SSLA with the eventual better solution quality of the iterative approach, but that using the hybrid solver actually improves the quality of the final solution compared to using the iterative approach alone.

Two possible hypotheses that could explain this observation were rejected: (i) better initializations do not necessarily lead to lower final solution costs, and (ii) using SSLA does not significantly increase the searched solution space. Instead, we hypothesize that using an SSLA (such as CoCoA) selects an initialization that is in a region of the solution space which has a lower local minimum than the statistical expected local minimum. This is caused by a reduction of conflicting values assigned on bridge vertices. In our final experiment we show that the effect is most abundant on low density graphs, in which (pseudo-)bridges are more present, and the solution cost of the search space is less homogenous.

Our hybrid approach seems well suited for applications in which maximum performance in terms of both convergence speed and solution cost is required. In fact we may use it as a general strategy for initial value assignment instead of using random values in problems with low graph densities.

\end{document}